\newtheorem{proposition}{Proposition}
\newtheorem{theorem}{Theorem}
\theoremstyle{definition}
\newtheorem{Assumption}{Assumption}[section]
\newtheorem{definition}{Definition}
\title{\LARGE \bf Monitor-Based Runtime Assurance for Temporal Logic Specifications}
\author{Matthew Abate, Eric Feron, and Samuel Coogan
\thanks{
This research was supported in part by the National Science Foundation under award \#1749357 and award \#1836932.}
\thanks{M. Abate is with the School of Mechanical Engineering, Georgia Institute of Technology, Atlanta, 30332, USA {\tt\small Matt.Abate@GaTech.edu}.}
\thanks{E. Feron is with the School of Aerospace Engineering, Georgia Institute of Technology, Atlanta, 30332, USA {\tt\small Eric.Feron@Aerospace.GaTech.edu}.}
\thanks{S. Coogan is with the School of Electrical and Computer Engineering and the School of Civil and Environmental Engineering, Georgia Institute of Technology, Atlanta, 30332, USA {\tt\small Sam.Coogan@GaTech.edu}.}
}
\begin{document}

\maketitle
\thispagestyle{empty}
\pagestyle{empty}

\begin{abstract}
This paper introduces the \textit{safety controller} architecture as a runtime assurance mechanism for system specifications expressed as safety properties in Linear Temporal Logic (LTL).
The safety controller has three fundamental components: a performance controller, a backup controller, and an assurance mechanism.
The assurance mechanism uses a \textit{monitor}, constructed as a finite state machine (FSM), to analyze a suggested performance control input and search for system trajectories that are \textit{bad prefixes} of the system specification.
A fault flag from the assurance mechanism denotes a potentially dangerous future system state and triggers a sequence of inputs that is guaranteed to keep the system safe for all time.
A case study is presented which details the construction and implementation of a safety controller on a non-deterministic cyber-physical system.
\end{abstract}

\section{Introduction}
Modern cyber-physical systems (CPS) are complex and sometimes do not behave as expected.
Correctness can be partially addressed with offline verification, however, the end-to-end verification of an entire system is often prevented by its complexity.
In order to compensate for the lack of assurances, it is desirable to enforce correctness online.

For purely cyber systems, runtime correctness can be checked online using monitors \cite{Bartocci2018, Bauer, Neider2018RobustMO, Deshmukh2017, Sayan}. 
In this context, correctness is evaluated with respect to a temporal logic specification; a monitor observes the temporal behaviors of the cyber system and notifies the system operator if a fault is suspected.
Such monitors are convenient for system verification because they can be algorithmically generated from temporal logic specifications.  
Monitors fail, however, to assure correct system behavior; monitors can only detect system faults and do not have the ability to intervene.
As an alternative, edit automata read a sequence of system inputs and, should it recognize a harmful string in that sequence, edits the control input before it is passed on in the software \cite{Ligatti2005}.
Edit automata are formed corresponding to temporal logic specifications, however, it is unclear how hard it is to generate an edit automaton given such a specification.
It was shown by \cite{Bielova}, for instance, that an edit automaton for a given specification is not unique, and that the algorithm presented in \cite{Ligatti2005} for constructing edit automata is not optimal.

For physical control systems, runtime assurance can be performed by identifying a controlled forward invariant region of the state space, and then ensuring that the system does not exceed the boundary of this region. 
Verification techniques of this nature include level set methods \cite{Tomlin}, barrier certificate based methods \cite{Prajna}, and certain model predictive control (MPC) based methods \cite{Wongpiromsarn, Jadbabaie}. 
Computing invariant sets, however, is generally computationally expensive, and moreover, this approach typically is only well suited to analyze invariance conditions or reach conditions.
CPS could, for instance, use an invariance check to avoid stationary obstacles or to patrol a static region of the state space; however, this method is ineffective for more complex system properties, for example, properties which specify that system goals are met in a certain order.  

It is not overtly clear how one might combine results for purely cyber systems with results for physical systems in order to ensure correct behavior of modern CPS against complex specifications. 
For example, by the time a monitor recognizes that a physical system is about to violate its operating specification, the system may have entered an invariant region of the state space from which it is impossible to satisfy the system specification going forward.
A basic idea that has emerged in different contexts is to verify correct system behavior at runtime, and then protect the potentially faulty complex controller with a less sophisticated, but demonstrably safer, backup controller.
This technique is referred to as ``runtime assurance" or the Simplex Architecture \cite{Simplex}.

In this paper, we investigate runtime assurance for system specifications that can be expressed in Linear Temporal Logic (LTL).
In particular, we restrict to the class of \textit{safety} properties.
An LTL property is a safety property if each invalid infinite trace contains a finite prefix which cannot be extended to any valid, satisfying trace \cite{Baier}. 
For instance, completing a set of CPS tasks in a specified order is a safety property, as is returning to a region periodically.
We restrict to safety properties due to certain technical properties of monitors; however, these technical properties correspond to reasonable restrictions on the ability of any runtime assurance mechanism. 
Informally, if an LTL specification is not a safety property, then it contains a liveness condition requiring that some desirable property eventually becomes true. 
For example, reaching a goal region is a liveness property. 
Ensuring that a cyber-physical system does not violate a liveness property appears to be essentially equivalent to solving the controller synthesis problem, which we expressly attempt to avoid in this paper. 
We elaborate on the justification and implications of restricting to safety properties in Section 2.

For system specifications that are expressible as safety properties, we introduce the \textit{safety controller} as a provably correct runtime assurance mechanism.
The safety controller (Figure 1) has three fundamental components: a performance controller, a backup controller and an assurance mechanism.
The assurance mechanism uses a monitor automaton to analyze a suggested performance control input and determine whether the resulting system trajectory is a \textit{bad prefix} for the system specification.
A fault flag from the assurance mechanism denotes a potentially dangerous future system state and triggers an input sequence which is guaranteed to keep the system safe for all time.

\begin{figure}[t!]
    \centering
    \includegraphics[width=.46\textwidth]{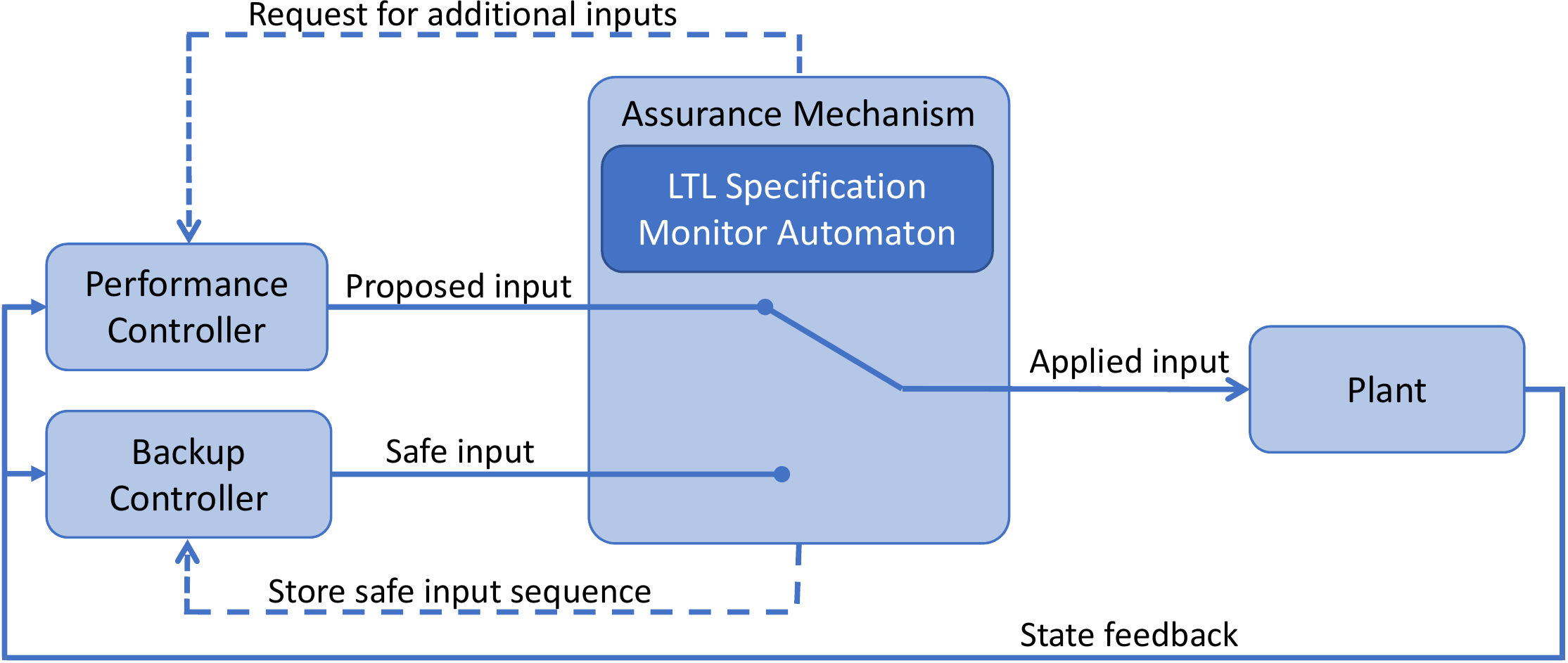}
    \caption{Safety Controller Architecture}
    \label{fig:MS2}
\end{figure}

The backup controller is characterized by a subset of the product of the system and monitor state spaces where correct performance can be provided on an infinite time horizon; we refer to this subset as the high assurance region.
Importantly, the assurance mechanism allows the performance controller to steer the system outside its known high assurance region, provided that the performance controller can demonstrate safety beforehand. 
This creates a trade-off between the \textit{a priori} development of a backup controller and the real-time burden placed on the performance controller. 

This paper is organized in the following way.
We provide a discussion on the construction and application of monitor automata in Section 2.
Section 3 presents the problem formulation and then introduces the safety controller architecture as a correct-by-construction runtime assurance mechanism for deterministic system models. This architecture is expanded in Section 4, to assure non-deterministic CPS system models. 
We provide a discussion on the trade-off between offline platform development and online assurance in Section 5, and the paper concludes with a final case study, provided in Section 6.


\section{Preliminaries on Monitoring for Linear Temporal Logic Properties}
In this section, we introduce \textit{monitor automata} as a runtime verification tool from automata based model checking.


\subsection{Monitor Construction in LTL$_3$}
To formally describe the execution of a CPS, we encode system events as \textit{atomic propositions}; a sequence of events, or a \textit{word}, therefore denotes the progression of the system through time. 
We use $AP$ to denote a set of atomic propositions, $\Sigma = 2^{AP}$ to denote a finite alphabet, and $\Sigma^*$ and $\Sigma^\omega$ to denote the sets of finite and infinite system traces over $\Sigma$, respectively.  
System properties are specified in Linear Temporal Logic (LTL).
For an in depth discussion of the LTL semantics, we refer the reader to \cite{Belta} Section 2.1.

Consider an LTL specification $\varphi$ and a finite path fragment $w \in \Sigma^*$.
Note that there may be no continuations of $w$ which satisfy $\varphi$, or equivalently, the system may have reached a state where it can no longer satisfy its specification under any possible future execution.  
We therefore introduce the definitions of bad and good prefixes in order to describe the set of finite words from which it is impossible to satisfy or violate a system specification.

\begin{definition}[Good and Bad Prefixes]
Consider a finite word $w_1 \in \Sigma^*$ and an LTL property $\varphi$. 
$w_1$ is said to be a \textit{good prefix} for $\varphi$ if $w_1 w_2 \models \varphi$ for all $w_2 \in \Sigma^\omega$.
Similarly, $w_1$ is said to be a \textit{bad prefix} for $\varphi$ if $w_1 w_2 \not\models \varphi$ for all $w_2 \in \Sigma^\omega$.
\end{definition}

\noindent{}Further, we introduce the notion of a \textit{truth value}, in order to classify whether a finite word is a good or bad prefix for $\varphi$.

\begin{definition}[LTL$_3$ Semantics, \cite{Bauer} Section 2.2]
Let $w\in\Sigma^*$ denote a finite word.  The \textit{truth value} of an LTL$_3$ formula $\varphi$ with respect to $w$, denoted $[w\models\varphi]$, is an element of $\mathbb{B}_3 = \{\top, \perp, ?\}$ defined as follows:
\begin{equation*}
[w\models\varphi] = 
\begin{cases}
\top & \text{if } \forall\sigma\in\Sigma^\omega \::\: w\sigma \models \varphi\\
\perp & \text{if } \forall\sigma\in\Sigma^\omega \::\: w\sigma \not\models \varphi\\
? & \text{otherwise.} \\
\end{cases}
\end{equation*}
\end{definition}
\noindent{}Equivalently, the truth value $\varphi$ with respect to $w$ is \textit{true} $``\top"$ if $w$ is a good prefix for $\varphi$, \textit{false} $``\bot"$ if $w$ is a bad prefix for $\varphi$, and \textit{inconclusive} $``?"$ otherwise.

Finally, we formalize the automata-based monitoring procedure for LTL$_3$ (Definition \ref{def:monitor}).  We refer the reader to \cite{Bauer} for a comprehensive discussion on monitor construction, and we provide a sample monitor construction in Figure \ref{samplemonitor}.

\begin{definition}[Monitors in LTL$_3$] \label{def:monitor}
For a given property $\varphi$ a \textit{monitor automaton} $\mathcal{M}^\varphi$ is a finite state machine that reads finite words $w \in \Sigma^*$ and outputs $[w\models \varphi]$.
\end{definition}

\begin{figure}
    \centering
    \includegraphics[width=.40\textwidth]{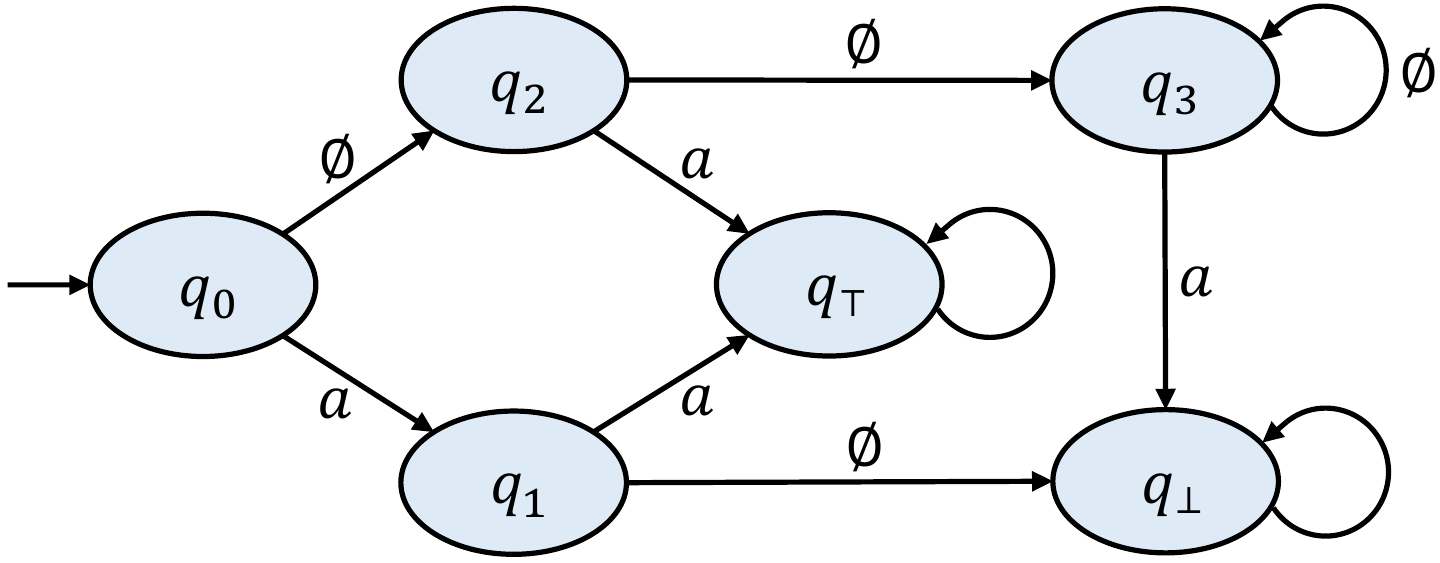}
    \caption{Monitor Automaton $\mathcal{M}^\varphi$ where $\varphi = \square \varnothing \vee \bigcirc a$ is an LTL property evaluated over $\Sigma = \{\varnothing, a\}$.
    Monitor states $q_\top$ and $q_\perp$ output \textit{true} and \textit{false}, respectively, and $q_0, \cdots, q_3$ output \textit{inconclusive}.}
    \label{samplemonitor}
\end{figure}

Note that a monitor state with output $\top$ or $\perp$ will always be a \textit{trap state}, as, for any LTL property, any continuation of a good/bad prefix for that property will also be a good/bad prefix.
Consequently, a monitor $\mathcal{M}^\varphi$ will never have more than one state with output $\top$ and one state with output $\perp$ after minimization.
We therefore use $q_\top$ and $q_\perp$ to denote monitor states with outputs $\top$ and $\perp$, respectively.


\subsection{Monitoring Safety Properties}
Given a LTL property $\varphi$, the monitor automaton $\mathcal{M}^\varphi$ may not be able to reach all of $\mathbb{B}_3$.
For instance, a monitor $\mathcal{M}^\varphi$ for $\varphi = \square a$, $\Sigma = \{\varnothing,\, a\}$, does not include a state $q_\top$.  
We therefore focus the scope of our study to LTL properties which guarantee certain monitor output behaviors.
Specifically, we recall the definition of \textit{safety properties} \cite{Kupferman2001}.

\begin{definition}[Safety Property]\label{def:safetyproperty}
An LTL property $\varphi_{safe}$ is said to be a \textit{safety property} if all words which violate $\varphi_{safe}$ contain a bad prefix for $\varphi_{safe}$.
\end{definition}

LTL safety properties make up a broad class of system specifications; for example, $\varphi = \square \varnothing \vee \bigcirc a$ (Figure \ref{samplemonitor}) is a safety property in LTL.
Furthermore, applying Definition 2.3, we can guarantee system safety by ensuring that the trace of every finite prefix trajectory does not contain a bad prefix for $\varphi_{safe}$.
We formalize this result in Proposition \ref{prop:badprefixes}.

\begin{proposition}\label{prop:badprefixes}
An infinite word $w \in\Sigma^\omega$ satisfies $\varphi_{safe}$ if $[\hat{w} \models \varphi_{safe}] \neq \perp$ for all prefixes $\hat{w} \in \text{Pref}(w)$.
\end{proposition}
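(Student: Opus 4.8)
The plan is to prove the contrapositive of the stated implication: assuming $w \not\models \varphi_{safe}$, I would exhibit a prefix $\hat{w} \in \text{Pref}(w)$ with $[\hat{w} \models \varphi_{safe}] = \perp$. Since the three relevant notions --- bad prefix, truth value $\perp$, and safety property --- are set up so as to line up almost verbatim, the argument is really a short chain of definition-chasing rather than a computation.

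First, I would apply the definition of a safety property (Definition~\ref{def:safetyproperty}): because $\varphi_{safe}$ is a safety property and the infinite word $w$ violates it, $w$ contains a bad prefix for $\varphi_{safe}$; that is, there is a finite prefix $\hat{w} \in \text{Pref}(w)$ such that $\hat{w} w_2 \not\models \varphi_{safe}$ for every continuation $w_2 \in \Sigma^\omega$. Second, I would match this condition against the defining clauses of the LTL$_3$ truth value: the requirement $\hat{w} w_2 \not\models \varphi_{safe}$ for all $w_2 \in \Sigma^\omega$ is precisely the second case in the definition of $[\,\cdot \models \varphi_{safe}\,]$, so $[\hat{w} \models \varphi_{safe}] = \perp$. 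Third, since this $\hat{w}$ belongs to $\text{Pref}(w)$, its truth value being $\perp$ directly contradicts the hypothesis that $[\hat{w} \models \varphi_{safe}] \neq \perp$ for every $\hat{w} \in \text{Pref}(w)$. This establishes the contrapositive, and hence the proposition.

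The only point requiring a little care --- and the closest thing here to an obstacle --- is keeping the quantifiers aligned: ``$w$ contains a bad prefix'' must be read as ``some element of $\text{Pref}(w)$ is a bad prefix,'' so that a single offending prefix defeats the \emph{universally} quantified hypothesis over $\text{Pref}(w)$, and the ``for all $w_2 \in \Sigma^\omega$'' appearing in the bad-prefix definition must be recognized as the same quantifier used in the $\perp$ clause of the truth value. It is also worth flagging what is \emph{not} claimed: the converse is false, since a word can satisfy $\varphi_{safe}$ while some proper prefix is merely inconclusive rather than a good prefix, so no strengthening to an ``if and only if'' should be attempted. Finally, one may optionally remark --- as a corollary rather than part of the proof --- that since $q_\perp$ is a trap state of $\mathcal{M}^{\varphi_{safe}}$, enforcing the hypothesis in practice amounts to ensuring that the monitor run on the successive prefixes of $w$ never enters $q_\perp$.
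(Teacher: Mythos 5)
Your proof is correct and is exactly the definition-chasing argument the paper intends (the paper states Proposition~\ref{prop:badprefixes} as an immediate consequence of Definitions~2 and~\ref{def:safetyproperty} and omits a written proof): the contrapositive via the safety-property definition, identifying the bad prefix's truth value as $\perp$, is the natural and correct route. Your quantifier bookkeeping and the remark that the converse fails are both accurate.
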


Note that a monitor $\mathcal{M}^{\varphi_{safe}}$, for a safety property $\varphi_{safe}$ not semantically equal to \textit{true}, is guaranteed to have a state with output $\perp$.  
Moreover, we can verify system trajectories against $\varphi_{safe}$ by ensuring the system run over $\mathcal{M}^{\varphi_{safe}}$ never enters $q_\bot$. 


\section{The Safety Controller Architecture}
Here we apply monitor automata in a framework which \textit{enforces} system objectives at runtime.  
For system objectives which are expressible as safety properties in LTL, we create a notion of system safety and describe a control architecture which is guaranteed to keep the system safe for all time.
We refer to this control architecture as a \textit{safety controller}.
The safety controller has three fundamental components: a high-performance controller, a backup controller, and an assurance mechanism. 
The assurance mechanism is implemented through two algorithms which ensure system safety by continually choosing between the high-performance control input and the high-assurance control input.


\subsection{Preliminaries and Notation}
Consider a discrete-time non-deterministic control system
\begin{equation}\label{eq:nondeterministic}
    x^+ = f(x, u, d)
\end{equation}
\noindent{}where, $x_k \in \mathcal{X} \subseteq \mathbb{R}^n$, $u_k \in \mathcal{U} \subseteq \mathbb{R}^m$ and $d_k \in \mathcal{D} \subset \mathbb{R}^p$ represent the system state, control input, and a non-deterministic bounded disturbance, at a time $k \in \mathbb{N}_{\geq 0}$, respectively.
We make no assumptions on the stochastic properties of $d_k$ within $\mathcal{D}$, i.e. $d_k$ is not assumed to conform to any probability distribution. 

Associated with the system (\ref{eq:nondeterministic}) is a set of atomic propositions $AP$ and a labeling function $L:\mathcal{X}\rightarrow{}2^{AP}$.  
As before, let $\Sigma = 2^{AP}$ denote a finite alphabet, and let $\Sigma^*$ and $\Sigma^\omega$ denote the sets of finite and infinite traces of the system (\ref{eq:deterministic}) over $\Sigma$, respectively.
Additionally, let $\varphi$ be property in LTL.
We aim to construct a string of inputs $u_0, u_1, \cdots$ which, when fed to the system, will create a system trace $L(x_0\, x_1 \cdots ) \models\varphi$.


\subsection{Assurance Through Monitoring}
We first consider the case with no disturbance $d$.
After developing a general framework for approaching this deterministic formulation, we return to the non-deterministic setting of (\ref{eq:nondeterministic}).
To that end, consider a system
\begin{equation}\label{eq:deterministic}
    x^+ = f(x, u)
\end{equation}
\begin{equation*}
    x_k \in \mathcal{X} \subseteq \mathbb{R}^n, \quad u_k \in \mathcal{U} \subseteq \mathbb{R}^m, \quad k \in \mathbb{N}.
\end{equation*}
\noindent{}We aim to ensure that a safety property $\varphi \in \text{LTL}$ is satisfied over a (infinite) system trace.  
Following Proposition \ref{prop:badprefixes}, we say that a finite system trajectory $x_0, \cdots{}, x_k$ is \textit{safe} if $[L(x_0 \cdots x_k) \models \varphi] \in \{\top, ?\}$, and \textit{unsafe} if $[L(x_0 \cdots x_k) \models \varphi] = \; \perp$.
A controller which is implemented on the system must first ensure the system stays safe before pursuing any auxiliary system objectives, such as enforcing optimality constraints or attempting to satisfy additional LTL properties.

Assume we have access to a controller which is claimed to be able to keep the system safe, while also possibly meeting some auxiliary system objective.  
We denote this a \textit{performance controller}. 
Here, the performance controller is characterized by three assumptions:
\begin{Assumption}
At each time $k$, the performance controller proposes a control input to be applied to the system.
\end{Assumption}
\begin{Assumption}
If requested, the performance controller can propose future inputs $\tilde{u}_{k}, \cdots, \tilde{u}_{k+N_{max}}$, for some $N_{max}\in\mathbb{N}$, such that at time $j$, $k \leq j \leq k+N_{max}$ the performance controller intends to apply $\tilde{u}_j$ to the system.
The performance controller is not obligated to use these control inputs at future time steps.
\end{Assumption}
\begin{Assumption}
The performance controller might bear faulty, i.e. applying the performance control input to (\ref{eq:deterministic}) may eventually cause the system to violate its specification $\varphi$.
\end{Assumption}

This definition of the performance controller is quite broad and encompasses general unverified feedback control laws.
For example, a human operator, whose effectiveness cannot be verified \textit{a priori}, can be thought of as a performance controller for a manned CPS.
Despite the lack of global assurances, we assume that certain aspects of the performance control law make it preferable if safety can be determined beforehand. 
For example, the performance control may be designed to optimize some objective or achieve some auxiliary goal.

We additionally assume the existence of a controller which is known to satisfy the safety objective, but might be limited in abilities or performance. 
We denote this mechanism a \textit{backup controller}, formalized next.

\begin{definition}[Backup Controller and High Assurance Region]\label{def:backitup}
For a system of the form (1), a safety property $\varphi \in \text{LTL}$, and a corresponding monitor $\mathcal{M}^{\varphi} = (\Sigma,\, Q,\, q_0,\, \delta,\, \lambda)$, let $S = \mathcal{X}\times Q$ denote the total state space of the combined system and monitor.  
A \textit{backup controller} is characterized by a subset of the state space $S^b \subseteq S$ such that for any $(x,q)\in S^b$, there exists an infinite sequence of control inputs known to the backup controller such that the resulting infinite horizon system trace satisfies $\varphi$.  
$S^b$ is referred to as the \textit{high assurance region} of the backup controller.
\end{definition}

Let us fix a safety property $\varphi$, a corresponding monitor $\mathcal{M}^\varphi = (\Sigma,\, Q,\, q_0,\, \delta,\, \lambda)$, a backup controller and a high assurance region $S^b$ for the remainder of this section.  
We recall the problem formulation: given a system of the form (\ref{eq:deterministic}), generate a sequence of control inputs which, when fed to the system, will create a system trace which satisfies $\varphi$.
Trivially, the system can be kept safe for all time by applying the backup control input to the system at each time step.
This method of input generation, however, restricts the system to operating inside the high assurance region, and removes the possibility of reaching any auxiliary goals.
For this reason, we interpret our problem formulation as follows: create a logical architecture which chooses whether to apply the performance control input or the backup control input to the system, at each time step, such that the resulting system trace satisfies $\varphi$.
In this case, whenever possible, the chosen input should be that of the performance controller.
We refer to this embedded decision logic as the \textit{assurance mechanism} of the safety controller.
Importantly, we give the system the ability to leave $S^b$ by applying the performance control input to the system in instances where the assurance mechanism can provide assurance in the form of a suggested input sequence, certifying that the system can return to the high assurance region once it has left.
We justify this decision with Proposition \ref{prop:wayhome}. 

\begin{proposition}\label{prop:wayhome}
Let $x_0, \cdots, x_k$ denote a finite trajectory of system (\ref{eq:nondeterministic}), and let  $q_0, \cdots, q_k$ be the corresponding run of the trajectory over $\mathcal{M}^\varphi$.  If $(x_k, q_k)\in S^b$, then $L(x_0 \cdots x_k)$ is not a bad prefix for $\varphi$.
\end{proposition}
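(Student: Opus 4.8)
The plan is to read the claim off almost directly from Definition~\ref{def:backitup} together with the definition of good and bad prefixes (Definition~1). Write $\sigma := L(x_0 \cdots x_k) \in \Sigma^*$ for the finite trace associated with the given trajectory. Since $q_0, \cdots, q_k$ is the run of this trajectory over $\mathcal{M}^\varphi$, the state $q_k$ is the monitor state associated with the observed prefix, so that $(x_k, q_k)$ is precisely the combined system--monitor state reached after $x_0, \cdots, x_k$ has been executed and its trace processed.

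First I would invoke the hypothesis $(x_k, q_k) \in S^b$ and apply Definition~\ref{def:backitup}: there exists an infinite sequence of control inputs available to the backup controller whose application, continuing the trajectory $x_0, \cdots, x_k$, produces an infinite-horizon system trace satisfying $\varphi$. Denoting the states of that continuation by $x_{k+1}, x_{k+2}, \ldots$ and setting $w_2 := L(x_{k+1}) L(x_{k+2}) \cdots \in \Sigma^\omega$, the trace in question is exactly $\sigma\, w_2$, hence $\sigma\, w_2 \models \varphi$. I would then apply Definition~1: $\sigma$ is a bad prefix for $\varphi$ only if $\sigma\, w \not\models \varphi$ for \emph{every} $w \in \Sigma^\omega$. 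Having exhibited a single continuation $w_2$ with $\sigma\, w_2 \models \varphi$, we conclude that $\sigma = L(x_0 \cdots x_k)$ is not a bad prefix for $\varphi$, which is the claim. (This is of course consistent with, though it does not require, Proposition~\ref{prop:badprefixes}.)

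The argument is short, so instead of a genuine obstacle there are two points that require care. The first is the correct reading of Definition~\ref{def:backitup}: ``the resulting infinite horizon system trace'' must be understood as the full trace obtained by prepending the already-observed prefix $\sigma$ to the backup-controlled future --- this is exactly why the monitor component $q$ is carried along in $S = \mathcal{X} \times Q$, since $q$ records that prefix. The second is the non-deterministic setting of~(\ref{eq:nondeterministic}): the backup controller's guarantee may be read as holding for one disturbance realization or for all of them, but in either case it supplies at least one infinite continuation whose trace satisfies $\varphi$, and by Definition~1 a single satisfying continuation already suffices to rule out $\sigma$ being a bad prefix. No invariance or reachability computation enters the proof.
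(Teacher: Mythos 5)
Your proof is correct and is essentially the argument the paper intends: the paper states Proposition~\ref{prop:wayhome} without an explicit proof, and the justification is precisely your unfolding of Definition~\ref{def:backitup} (membership $(x_k,q_k)\in S^b$ yields at least one infinite continuation $w_2$ with $L(x_0\cdots x_k)\,w_2 \models \varphi$) combined with the definition of a bad prefix, which requires that \emph{no} continuation satisfy $\varphi$. Your two cautionary remarks — that the monitor component $q_k$ is what ties the backup controller's guarantee to the already-observed prefix, and that a single satisfying disturbance realization suffices — are exactly the right points to make explicit, so nothing is missing.
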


It follows from Proposition \ref{prop:wayhome} that if the performance controller is able to suggest a path back to the high assurance region $S^b$ from every point along a system trajectory, then the performance control input is safe.
We therefore create a procedure which takes an initial state $(x,q)\in \mathcal{X}\times Q$ and returns a finite sequence of control inputs $\mu_r$ such that:
\begin{enumerate}
    \item $\mu_r = \varnothing$ signifies a fault in the performance controller, and
    \item $\mu_r = u_0, \cdots, u_k$ signifies that applying the sequence of control inputs will return the system to $S^b$.
\end{enumerate}
\noindent{}For an initial system monitor state, we use the term \textit{recovery input sequence} to denote such a sequence of control inputs which drive the system state into $S^b$.

We implement this procedure by calling \textsc{Recovery}$(x, q)$, provided in Algorithm 1.
Succinctly, \textsc{Recovery}$(x, q)$ simulates the system dynamics using the current system monitor state and a non-empty sequence of suggested future performance control inputs; if the future system monitor state is contained in $S^b$, then the suggested input sequence is returned, and, if not, the null sequence is returned, signaling a system fault.
Therefore, if the performance controller is not faulty, then the output of Algorithm 1 is a non-empty recovery input sequence for the current system monitor state.
\begin{algorithm}[t!]
\caption{Generate A Non-Empty Recovery Input Sequence for the Current State}
\begin{algorithmic}[1]
\setlength\tabcolsep{0pt}
\Statex\begin{tabulary}{\linewidth}{@{}LLp{6cm}@{}}
\textbf{input}&:\:\:& a current state $(x, q)\in S$\\
\textbf{output}&:\:\:& a sequence of control inputs $\mu_r$
\end{tabulary}

\Function{Recovery}{$x, q$}
\State \textbf{Initialize: } $\tilde{x}_0 = x$, $\tilde{q}_0 = q$, $i=0$
\While{$(i \leq N_{max})$ \textbf{and} $(\tilde{q}_i \neq q_{_\perp})$}
  \State $\tilde{u}_i \gets \text{Request\textunderscore{}Next\textunderscore{}Input}$
  \State $\tilde{x}_{i+1} \gets f(\tilde{x}_i, \tilde{u}_i)$
  \State $\tilde{q}_{i+1}\gets\delta(\tilde{q}_i, L(\tilde{x}_{i+1}))$
  \If{$(\tilde{x}_{i+1}, \tilde{q}_{i+1}) \in S^b$}
    \State\textbf{return} $\:\mu_r = \tilde{u}_0,\,\cdots,\,\tilde{u}_i $
  \EndIf
  \State $i \gets i+1$
\EndWhile
\State\textbf{return} $\:\mu_r = \varnothing$
\EndFunction
\State\textbf{end function}
\end{algorithmic}
\end{algorithm}
We use Algorithm 1 to design a logical architecture for our safety controller as follows:
\begin{enumerate}
    \item At each time $k$ the decision logic calls $\textsc{Recovery}(x_k, q_k)$ and stores the output in a variable $\mu_r$.
    \item If $\mu_r$ is non-empty, then $u_0$ is applied to the system and $u_1, \cdots, u_j$ is stored to memory.
    \item If $\mu_r = \varnothing$ then the performance controller has suggested an input sequence that cannot be verified. The memorized recovery input sequence is applied to return the system to $S^b$, and the backup control input is then applied for all future time.
\end{enumerate}
We implement this procedure through Algorithm 2.
Algorithms 2 calls Algorithm 1 once per discrete time step, prompting the performance controller for a string of future inputs which steer the system to $S^b$.
The performance controller will be considered functional if such a sequence is found.
In the case that the performance controller is functional, the aforementioned procedure will create a finite series of inputs known to create a safe trajectory, but then only apply the first input to the system.
In this way, the method resembles a model predictive control framework.
Note that the assurance mechanism will need to simulate the system dynamics as many as $N_{max} + 1$ times per discrete system timestep, therefore it is assumed that the assurance mechanism has the necessary functionality to perform these computations.
Choosing control inputs using Algorithm 2 guarantees system safety for all time.
We formalize this result in Theorem 1.
\begin{algorithm}
\caption{Runtime Assurance for Discrete-Time Control Systems}
\begin{algorithmic}[1]
\State $\mu_r \gets \textsc{Recovery}(x_0, q_0)$
\While{$(\mu_r \neq \varnothing)$}
\State \textbf{Apply} $u_0$ to system
\State \textbf{Store} $u_1, \cdots, u_j$ to \textbf{Memory}
\State $(x, q) \gets \textsc{Current\_State}$
\State $\mu_r \gets \textsc{Recovery}(x, q)$
\EndWhile
\State \textbf{Apply} Recovery Input Sequence to system
\While{(\textbf{True})}
\State \textbf{Apply} \textsc{Backup\_Input} to system
\EndWhile
\end{algorithmic}
\end{algorithm}
\begin{theorem}[Runtime Assurance for Deterministic CPS]\label{safetytheorem}
Let $\varphi \in \text{LTL}$ be a safety property, and let $w = L(x_0\,x_1 \cdots )$ be the trace of an infinite trajectory resulting from a sequence of inputs chosen using Algorithm 2.  If the system is initialized in the high assurance region, i.e. $(x_0, q_0)\in S^b$, then $w \models \varphi$.
\end{theorem}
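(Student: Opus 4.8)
The plan is to reduce the claim, via Proposition~\ref{prop:badprefixes}, to a statement purely about the monitor run. Since $\varphi$ is a safety property, $w\models\varphi$ holds as soon as no finite prefix of $w$ is a bad prefix for $\varphi$, and by the definition of $\mathcal{M}^\varphi$ together with the trap‑state property of $q_\perp$, a prefix $L(x_0\cdots x_k)$ is a bad prefix exactly when the corresponding monitor state $q_k$ equals $q_\perp$. Hence it suffices to show that the run $q_0,q_1,q_2,\ldots$ induced by the infinite trajectory produced by Algorithm~2 never visits $q_\perp$.

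Next I would decompose that trajectory into the (at most) three successive phases Algorithm~2 produces: a first phase consisting of the iterations of the \texttt{while} loop (which may run forever, in which case this is the entire trajectory); a second, possibly empty, phase in which the memorized recovery input sequence is played out after the loop exits; and a third phase in which \textsc{Backup\_Input} is applied forever. In the first phase, inspecting Algorithm~1 shows that \textsc{Recovery}$(x,q)$ can return a nonempty sequence only when $q\neq q_\perp$ (otherwise the \texttt{while} guard fails immediately and $\varnothing$ is returned); since the loop keeps iterating only while the current \textsc{Recovery} output is nonempty, every monitor state visited during this phase differs from $q_\perp$. In the second phase I would use determinism of~(\ref{eq:deterministic}): the closed‑loop trajectory traced while the memorized sequence is applied coincides with the internal simulation carried out by the \textsc{Recovery} call that produced that sequence; in that simulation every monitor state passed the \texttt{while} guard (hence is $\neq q_\perp$) and the terminal state lies in $S^b$, hence is $\neq q_\perp$ by Proposition~\ref{prop:wayhome}. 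The same observation shows the state entering the third phase lies in $S^b$ — or, if the \texttt{while} loop never executes at all, that state is $(x_0,q_0)$, which is in $S^b$ by hypothesis, and this is essentially the only place the initialization assumption is used. In the third phase I would invoke Definition~\ref{def:backitup}: from a state in $S^b$ the backup controller has an infinite input sequence whose resulting infinite trace satisfies $\varphi$, equivalently (again because $\varphi$ is a safety property, read through the monitor) an infinite continuation along which the monitor run never reaches $q_\perp$. Concatenating the three phases, the full run $q_0,q_1,\ldots$ never visits $q_\perp$, so Proposition~\ref{prop:badprefixes} yields $w\models\varphi$.

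The main obstacle I expect is bookkeeping rather than a deep idea: carefully aligning the index offsets between the simulated trajectory inside \textsc{Recovery} and the actual closed‑loop trajectory (including the degenerate case where the returned recovery sequence has length one and the memorized tail is empty, so the second phase adds no new states), and being precise that ``the resulting infinite horizon system trace satisfies $\varphi$'' in Definition~\ref{def:backitup} must be interpreted relative to the monitor state already reached. It is exactly at that interpretation step, and in the reduction of $w\models\varphi$ to ``the run avoids $q_\perp$,'' that the restriction to safety properties is essential; for a non‑safety $\varphi$ the monitor need not even possess a $q_\perp$ state, and satisfaction could fail with no finite bad prefix witnessing it, so the argument would not go through.
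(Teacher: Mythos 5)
Your proposal is correct and follows essentially the same route as the paper: the paper omits a separate proof of Theorem~\ref{safetytheorem} and proves it as the $\mathcal{D}=\varnothing$ special case of Theorem~\ref{NDtheorem}, whose argument is exactly your phase decomposition — while recovery sequences keep being returned the monitor state stays off $q_\perp$ (by induction), and once the performance controller fails, the memorized recovery sequence returns the system to $S^b$, after which the backup controller guarantees satisfaction via Proposition~\ref{prop:wayhome}. Your treatment is in fact slightly more careful than the paper's on the intermediate monitor states during playback of the memorized sequence and on reading Definition~\ref{def:backitup} relative to the monitor state already reached, but the underlying argument is the same.
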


In Section 4, we expand upon the results of this section in order to design a runtime assurance procedure for non-deterministic systems.
To that end, we omit a proof for Theorem \ref{safetytheorem}, with the knowledge that a later theorem encapsulates the findings of this section (Theorem \ref{NDtheorem}).



\section{Runtime Assurance for Non-Deterministic CPS}
In practice, it is important to consider CPS models which account for potential system disturbances.  We therefore turn our study toward assuring for non-deterministic systems.


\subsection{Preliminaries and Notation}
Reconsider the discrete-time non-deterministic CPS model (\ref{eq:nondeterministic}).  
We retain the definitions of the system parameters $x$, $u$ and $d$, as well as the associated sets $AP$, $\Sigma$, $\Sigma^*$, and $\Sigma^\omega$, from Section 3.
Additionally let $\varphi$ denote a mission objective that is a safety property in LTL.


\subsection{Runtime Assurance in the Presence of Disturbances}
We aim to extend our results from Section 3 to assure the non-deterministic system (\ref{eq:nondeterministic}) against  $\varphi$.  
To that end, we retain the definition of a \textit{backup controller} from Section 2, now applied to (\ref{eq:nondeterministic}), and we use the term \textit{performance controller} to denote a controller which claims it will keep the system safe while also meeting some auxiliary objective. 
Here, we take the following assumption on the performance control law, replacing Assumptions III.2:
\begin{Assumption}
If requested, the performance controller can also propose a sequence $(X_k, g_k),$ $\cdots, (X_{k+N_{max}}, g_{k+N_{max}})$ of feedback control laws $g_i: \mathcal{X} \rightarrow \mathcal{U}$ and regions of the state space $X_i\subseteq \mathcal{X}$, such that at a future time $i$, the performance controller intends to choose its inputs using $g_i$ provided the current state of the system $x_i\in X_i$.
The performance controller is not obligated to choose future inputs using these policies.
\end{Assumption}
\noindent{}Note that Assumption IV.1 subsumes Assumption III.2 as a special case. 
In particular, the potential sequence of proposed inputs in Assumption III.2 can be considered a certain static feedback map $g_i$ such that $g_i(x)=u_i$ for all $x$.

For an initial state $(x_0, q_0) \in S$, and a sequence of feedback control laws $g_0, g_1, \cdots$, we denote the set of reachable states after $i$ steps by $R_i$, i.e.
\begin{equation*}
    \begin{split}
        R_1 & = \{(x, q) \mid x = f(x_0, g_0(x_0), d), \\ 
        & \qquad\qquad\qquad\qquad\qquad d\in \mathcal{D}, \: q = \delta(q_0, L(x))\} \\
        R_{i+1} & = \{(x, q) \mid x = f(\bar{x}, g_{i}(\bar{x}), d), \: (\bar{x}, \bar{q})\in R_{i}, \\
        & \qquad\qquad\qquad\qquad\qquad d\in \mathcal{D},\: q = \delta(\bar{q}, L(x))\}
    \end{split}
\end{equation*}
\noindent{}for all $i>1$.
While calculating reachable sets can be computationally expensive, numerous methods exist to approximate reachable sets efficiently \cite{Kurzhanski1993}. 
Therefore, let $\tilde{R}_i$ denote an over-approximation of $R_i$, i.e. $R_i \subseteq \tilde{R}_i$.

Using the approach described in Section 3 as a framework, we create a procedure which takes a current state $(x, q)\in S$ and returns a sequence of feedback control law $\gamma_r$ such that:
\begin{enumerate}
    \item $\gamma_r = \varnothing$ signifies a fault in the performance controller, and
    \item $\gamma_r = g_0, \cdots, g_{k}$ signifies that choosing inputs according to the sequence of feedback control laws sequential will return the system to $S^b$.
\end{enumerate}
\noindent{}We implement this procedure by calling a function
\textsc{RecoveryD}$(x, q)$ (Algorithm 3).
In the non-deterministic setting, we redefine the term \textit{recovery input sequence} to denote a sequence of feedback control laws which is guaranteed to steer the system to a state contained in $S^b$.  
Therefore, if the performance controller is not faulty, then the output of Algorithm 3 can be thought of as a non-empty recovery input sequence for the current state.

\begin{algorithm}
\caption{Generate A Non-Empty Recovery Input Sequence in the Presence of Disturbances}
\begin{algorithmic}[1]
\setlength\tabcolsep{0pt}
\Statex\begin{tabulary}{\linewidth}{@{}LLp{6cm}@{}}
\textbf{input}&:\:\:& a current state $(x, q)\in S$.\\
\textbf{output}&:\:\:& a sequence of feedback control laws $\gamma_r$
\end{tabulary}
\Function{RecoveryD}{$x, q$}
\State \textbf{Initialize: } $\tilde{R}_0 = \{(x, q)\}$, $i=0$
\While{$(i\leq N_{max})$}
\State $X_i, g_i\gets$ Request\_Next\_Control\_Law
\If{$\{x\mid (x, q)\in \tilde{R}_i\} \not\subseteq X_i$}
  \State \textbf{return} $\gamma_r = \varnothing$
\EndIf
\State \textbf{Compute} $\tilde{R}_{i+1}$ using $\tilde{R}_{i}$ and $g_i$
\If{$(\tilde{R}_{i+1}\subseteq S^b)$}
  \State \textbf{return} $\gamma_r = g_0,\, \cdots,\, g_i$
\EndIf
\State $i\gets i+1$
\EndWhile
\State \textbf{return} $\gamma_r = \varnothing$\
\EndFunction
\State\textbf{end function}
\end{algorithmic}
\end{algorithm}

We use Algorithm 3 to design a logical architecture for our safety controller as follows:
\begin{enumerate}
    \item At each time $k$ the decision logic calls \textsc{RecoveryD}$(x_k, q_k)$ and stores the output in a variable $\gamma_r$.
    \item If $\gamma_r$ is non-empty, then $g_0(x_k)$ is applied to the system and $g_1, \cdots, g_j$ is stored to memory.
    \item If $\gamma_r = \varnothing$, then the performance controller has suggested an input sequence that cannot be verified. The memorized recovery input sequence is applied to return the system to $S^b$, and the backup control input is then applied for all future time.
\end{enumerate}
\noindent{}This procedure is implemented using Algorithm 4.

\begin{algorithm}
\caption{Runtime Assurance for Non-Deterministic Discrete-Time Control Systems}
\begin{algorithmic}[1]
\State \textbf{Initialize: } $x = x_0$, $q = q_0$
\State $\gamma_r \gets \textsc{RecoveryD}(x, q)$
\While{$(\gamma_r \neq \varnothing)$}
\State \textbf{Apply} $g_0(x)$ to system; 
\State \textbf{Store} $g_1, \cdots, g_j$ to \textbf{Memory}
\State $(x, q) \gets \textsc{Current\_State}$
\State $\gamma_r \gets \textsc{RecoveryD}(x, q)$
\EndWhile
\State \textbf{Apply} Recovery Input Sequence to system
\While{(\textbf{True})}
\State \textbf{Apply} \textsc{Backup\_Input} to system
\EndWhile
\end{algorithmic}
\end{algorithm}

\begin{theorem}[Runtime Assurance for Non-Deterministic CPS]\label{NDtheorem}
Let $\varphi \in \text{LTL}$ be a safety property, and let $w = L(x_0\,x_1 \cdots )$ be the trace of an infinite trajectory resulting from a sequence of inputs chosen using Algorithm 4.  If the system is initialized in the high assurance region, i.e. $(x_0, q_0)\in S^b$, then $w \models \varphi$.
\end{theorem}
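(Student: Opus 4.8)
The plan is to reduce the claim to a pure statement about the monitor run of the realized trajectory --- namely, that it never visits $q_\perp$ --- and then to verify this by following the control flow of Algorithm 4. By Proposition~\ref{prop:badprefixes}, to conclude $w\models\varphi$ it suffices that no prefix of $w$ be a bad prefix for $\varphi$; by the truth--value semantics together with the observation that $\mathcal{M}^\varphi$ has a single trap state $q_\perp$ with output $\perp$, this is equivalent to asking that the run $q_0,q_1,\ldots$ induced by $x_0,x_1,\ldots$ stay out of $q_\perp$ forever.

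\textbf{Two supporting facts.} First I would record that $S^b$ contains no state of the form $(x,q_\perp)$: if $(x,q_\perp)\in S^b$, then by Definition~\ref{def:backitup} the backup controller admits a continuation whose trace satisfies $\varphi$, but since $q_\perp$ is a trap state the monitor run of that continuation remains in $q_\perp$, so by the truth--value semantics no continuation can satisfy $\varphi$ --- a contradiction. Second, I would prove a ``certified cone'' property of Algorithm 3: whenever \textsc{RecoveryD}$(x,q)$ returns a non-empty $\gamma_r = g_0,\ldots,g_j$, the reachable sets $R_0 = \{(x,q)\}, R_1, \ldots, R_{j+1}$ generated by $g_0,\ldots,g_j$ satisfy $R_{j+1}\subseteq \tilde R_{j+1}\subseteq S^b$ and none of them contains a state with monitor component $q_\perp$; indeed, a $q_\perp$-state in some $R_i$ would, by the reachability recursion and the trap property, propagate a $q_\perp$-state into $R_{j+1}\subseteq S^b$, contradicting the first fact. (The projection tests that \textsc{RecoveryD} must pass also certify that each $g_i$ is used only on its declared domain $X_i$, consistent with Assumption~IV.1, though this is not needed for safety.)

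\textbf{Tracking Algorithm 4.} I would then split on whether the \texttt{while} loop of Algorithm 4 runs forever. If it does, then at each time $k$ the call \textsc{RecoveryD}$(x_k,q_k)$ returned such a cone and $g_0^{(k)}(x_k)$ was applied, so with $x_{k+1} = f(x_k, g_0^{(k)}(x_k), d_k)$, $d_k\in\mathcal D$, and $q_{k+1} = \delta(q_k, L(x_{k+1}))$, the pair $(x_{k+1},q_{k+1})$ lies in the one-step reachable set $R_1^{(k)}$ of that cone; the cone property gives $q_{k+1}\neq q_\perp$, and $q_0\neq q_\perp$ since $(x_0,q_0)\in S^b$ and the first fact, so $q_k\neq q_\perp$ for all $k$. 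If instead the loop exits at the first time $T$ at which \textsc{RecoveryD} returns $\varnothing$, the same argument handles $k<T$; from time $T$ onward Algorithm 4 replays the memorized tail $g_1^{(T-1)},\ldots,g_j^{(T-1)}$ of the cone certified at time $T-1$, which by the reachability recursion keeps the realized states at times $T,\ldots,T+j$ inside $R_1^{(T-1)},\ldots,R_{j+1}^{(T-1)}$ (hence out of $q_\perp$) and drives the system to $(x_{T+j},q_{T+j})\in R_{j+1}^{(T-1)}\subseteq S^b$, after which the backup controller takes over from a point of $S^b$ and, by Definition~\ref{def:backitup}, keeps the continued monitor run out of $q_\perp$ for all subsequent times (the sub-cases $T=0$ and $j=0$ are immediate, the relevant state lying directly in $S^b$). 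In every case $q_k\neq q_\perp$ for all $k$, and the reduction step finishes the proof; specializing each $g_i$ to a constant feedback map recovers Theorem~\ref{safetytheorem}.

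\textbf{Main obstacle.} I expect the delicate step to be the hand-off at time $T$: one must argue that the recovery plan memorized at time $T-1$ is still sound when it is actually executed --- i.e.\ that the real, disturbance-driven trajectory never escapes the over-approximated reachable-set tube $\tilde R_1^{(T-1)},\ldots,\tilde R_{j+1}^{(T-1)}$ --- and that the backup controller then takes over from a \emph{guaranteed} element of $S^b$ rather than a merely conjectured one. This is exactly where boundedness of $d$ matters: since $R_i$ is the reachable set over all $d\in\mathcal D$ and $\tilde R_i\supseteq R_i$, no realized disturbance sequence can leave the certified tube, so ``safe on the tube'' transfers to ``safe on the realized trajectory''.
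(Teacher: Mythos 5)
Your proposal is correct and follows essentially the same route as the paper's proof: an induction showing that as long as \textsc{RecoveryD} succeeds the monitor state stays off $q_\perp$, and, upon a fault, replay of the memorized recovery sequence to re-enter $S^b$ followed by the backup controller, with the conclusion drawn via Proposition~\ref{prop:badprefixes}. The only difference is that you make explicit two facts the paper uses implicitly (that $S^b$ contains no $q_\perp$-states and that the certified reachable-set tube excludes $q_\perp$ and remains valid for every realized disturbance), which is a welcome tightening rather than a different argument.
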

\begin{proof}
Let $(x_0, q_0)\in S^b$ be the initial system monitor state. Therefore $q_0\neq q_\perp$.
The performance controller is prompted for a non-empty recovery input sequence $\gamma_r$ for the current state (Line 2, Line 7). 
If $\gamma_r = g_0,\,\cdots,\,g_N$ is returned (Line 3), then the $u_0 = g_0(x_0)$ is passed to the system, which brings the system to a current state $(x_{1},\,q_{1})$, and $g_1,\,\cdots,\,g_N$ is saved to memory as a recovery input sequence for $(x_{1},\, q_{1})$.
The existence of a recovery input sequence for $(x_{1},\, q_{1})$ guarantees $q_{1}\neq q_\perp$. It follows inductively that if the performance controller is always able to supply a recovery input sequence for the current state, then $L(x_0\,x_1 \cdots ) \models \varphi$.
If instead there exists a finite time $k$ such that the performance controller is unable to generate a recovery input sequence for $(x_{k},\,q_{k})$ then the previous recovery input sequence will be applied to the system and then the backup control input is applied for all time (Line 8 and 9).  This forces the system to remain in $S^b$ indefinitely and therefore guarantees system safety by Proposition 2.
\qed
\end{proof}

Note that when $\mathcal{D} = \varnothing$, the non-deterministic system (\ref{eq:nondeterministic}) is equivalent to the deterministic system (\ref{eq:deterministic}).  Therefore, Theorem 1 is a special case of Theorem 2.

\section{Classifying the Trade-off Between A Priori Development and Effectiveness}
Here, we discuss the feasibility of designing and implementing a safety controller architecture. 
Specifically, we address the offline computation involved in constructing a  backup controller and the computational complexity of the algorithms which run online in the assurance mechanism.

\subsection{Developing a Safety Controller Offline}
Two steps are required to construct a safety controller from an LTL safety specification $\varphi$ and a given system model: the first step is to convert the system specification to a monitor automaton and the next step is to design a backup controller.

We refer the reader to \cite{Bauer} for a detailed discussion on monitor construction.
In short, this process involves realizing two deterministic finite automata (DFAs) and then computing their minimal product automaton.
There are well-defined procedures for constructing a DFA from an LTL specification, as well as procedures for computing product automata \cite{Baier}.
The resulting DFA is reduced to its minimal form by removing every unreachable state and every pair of  non-distinguishable states.
This procedure is implemented using Moore's algorithm \cite{minimization}, which has an average complexity $O(n\cdot{}\text{log}(n))$ when minimizing a DFA with $n$ states.
This procedure will only need to be performed once in the CPS development process.

A backup control law is developed by identifying a controlled invariant region of the system monitor state space $S^b \subseteq \mathcal{X}\times Q$, such that for all $(x, q) \in S^b$, $q\neq q_\perp$.
This follows from Proposition \ref{prop:wayhome}, i.e. in order to ensure system safety the backup controller must only ensure the system trace never enters the false monitor state $q_\perp$.
In order to compute such an invariant region of the system monitor state space, abstraction based methods are possible \cite{Belta, Tabuada}.


\subsection{Online Computation for the Assurance Mechanism}
Consider a CPS outfitted with a safety controller architecture.
We would like to characterize the computational resources which will be necessary online for assurance.

First, we propose that Assumption III.2 (IV.2), which requires the performance controller to suggest potential future inputs, fits well with existing control architectures; for instance, all MPC controllers employ this functionality. Moreover, since the performance controller is assumed to be provided in advance of the safety controller construction process, we do not discuss the computational complexity of designing a performance controller for a given CPS. Instead, let us discuss the requisite online computation requirements of the assurance mechanism and the backup controller.

The assurance mechanism will call Algorithm 1  (or 3) once per discrete time step, in order to access the performance control input. Therefore, the assurance mechanism must have sufficient computational capabilities to over-approximate reachable system states at runtime. Reachable set computation is a very well-studied problem in the controls and hybrid systems literature, with numerous efficient algorithms. See \cite[Chapter 29]{levine2010control} for an overview.


\subsection{Discussion}
The assurance mechanism will only ever apply a performance control input sequence if the set of reachable states resulting from that sequence is contained inside the high assurance region. There is, therefore, an intuitive trade-off between the \textit{a priori} development of the assurance mechanism and the amount of computational resources which are necessary online. For example, the likelihood of finding a safe performance control input sequence is maximized for safety controllers with large, well developed, high assurance regions.  Similarly, an assurance mechanism which computes tight approximations of reachable sets will be less likely to return a fault flag, in comparison to an assurance mechanism which uses conservative approximations.  In general, therefore, developing a comprehensive backup control policy, will reduce the burden placed on the assurance mechanism at runtime. This constitutes a trade-off between the \textit{a priori} development of the assurance mechanism and the effectiveness of the safety controller architecture.

Here we include some additional methods for increasing the capabilities of the assurance mechanism. In the instance that the system leaves $S^b$, with assurance from the performance controller, an assurance mechanism might choose to expand $S^b$ on the fly with the knowledge of the new system state and its corresponding recovery input sequence. As the high assurance region expands, the system will effectively \textit{learn} how to stay safe in previously unverified product states, equivalently increasing system performance.  
Safe learning is an active research area in the controls and formal methods communities. One modern technique, referred to as shielding, enforces LTL safety properties in a run-time assurance framework \cite{AAAI1817211}. Shields, however, only assure discrete-time discrete-state systems, whereas the safety controller architecture is applicable to systems with a continuous state space.
Shielding also assumes a probabilistic system disturbance, which we expressly attempt to avoid in our construction.
Additionally, in the instance that the performance controller is determined to be faulty, Algorithms 2 and 4 first apply the recovery input sequence to the system, forcing the system monitor trajectory to reenter $S^b$, and then pass the backup control input to the system for all time; an assurance mechanism may instead re-activate the preforming controller once $S^b$ is reached. In this case, the performance controller is certainly faulty and may again suggest an unsafe sequence of inputs at some-point in the future, however, even in this instance, the safety controller architecture will be able to assure the system against the mission objective.



\section{Case Study: An Accelerating DeLorean}
A safety controller architecture was implemented on a modified F1/10 race car (Figure \ref{fig:car}). F1/10 is an open-source 1/10 scale autonomous vehicle test-bed designed primarily for use by academic researchers \cite{foneten}.

\subsection{System Characterization}
When traveling along a straight line, the plant dynamics of the system conform to a non-deterministic discrete-time double integrator model.  Here the system state at a time $k\in \mathbb{N}_{\geq 0}$, is described by the car's position along the road $x(k) \in \mathbb{R}_{\geq 0}$ and the car's forward velocity $v(k) \in \mathbb{R}_{\geq 0}$. 
Control inputs were suggested by a human operator, who chose the applied motor torque with a wireless Logitech Gamepad F710 controller (Figure \ref{fig:car}); this input is therefore proportional to the experienced acceleration.  Additionally, a non-deterministic factor was included in the system model in order to encapsulate the effects of drag on the vehicle.

\subsection{Generating a Safety Controller given a Mission Objective}
Our mission objective is taken from the movie \textit{Back to the Future}: when the car passes the clock tower, the car's velocity must be greater than 2 meters per second. We give the car, hereafter referred to as a DeLorean, an initial position $(x_0, v_0) = (0, 0)$, and arbitrarily place the clock tower a distance 2.54 meters from the origin.

Without an enforcement mechanism, the human operator has the ability to suggest an input sequence which causes the vehicle to pass the clock tower with inadequate speed, thus violating the mission objective.
We therefore design an assurance mechanism to act as a filter between the the human operator and the plant, and ensure that the runtime performance of the DeLorean meets the mission objective.

\begin{figure}
    \centering
    \includegraphics[width=.265\textwidth]{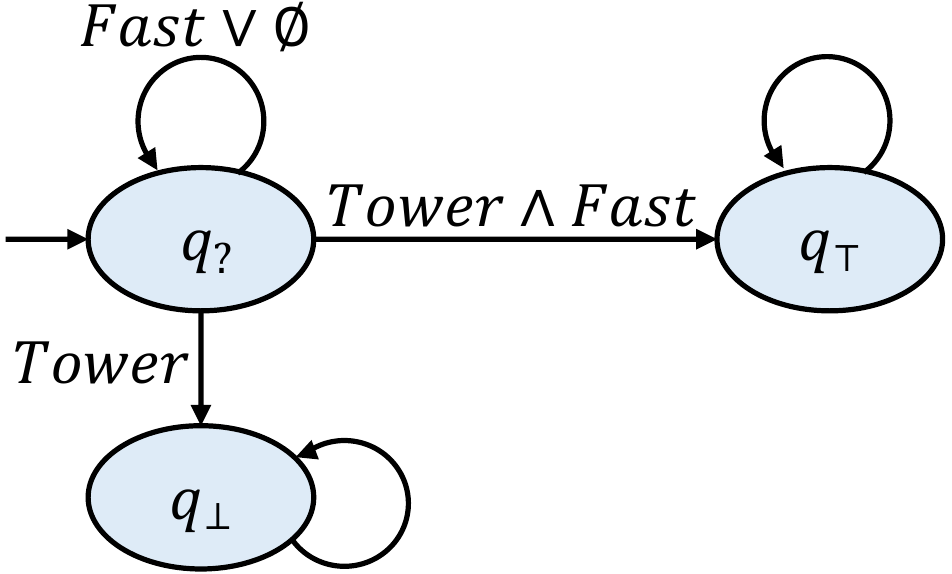}
    \caption{Monitor $\mathcal{M}^\varphi$ for $\varphi = (\neg Tower) \: U \:(Tower \wedge Fast)$.  Monitor state $q_?$ outputs inconclusive.}
    \label{fig:bttfmonitor}
\end{figure}

We convert our system specification to an LTL safety property as follows.
Let $AP = \{Tower, Fast\}$ be the set of events, where $Tower$ indicates that the DeLorean has driven past the clock tower, and $Fast$ indicates that the velocity of the DeLorean is greater than 2 m/s.
The trace of a system trajectory is therefore given by the labeling function  $L: \mathbb{R}^2 \rightarrow 2^{AP}$
\begin{equation*}
    L(x, v) = 
    \begin{cases}
        \varnothing & x < 2.54, \: v < 2\\
        Tower & x \geq 2.54, \: v< 2\\
        Fast & x < 2.54, \: v \geq 2\\
        Tower \wedge Fast & x \geq 2.54, \: v \geq 2.
    \end{cases}
\end{equation*}
\noindent{}In order to ensure that the DeLorean does not pass the clock tower with insufficient velocity, we enforce the LTL safety specification $\varphi = (\neg Tower) \: U \: (Tower \wedge Fast)$ whose monitor $\mathcal{M}^\varphi$ is provided in Figure \ref{fig:bttfmonitor}.
We take $S^b$ to be a triangular region of the state space from which the DeLorean can decelerate to zero velocity safely:
\begin{equation*}
    S^b = \{(x, v, q) \mid q = q_{\top} \:\textbf{ or }\: v \leq -0.69x+1.66, \: q = q_?\}.
\end{equation*}
\noindent{}In this case, if the DeLorean is in a current state $(x, v, q)\in S^b$, then the backup controller will suggest that the DeLorean brake such that the vehicle decelerates to a stop. 
A safety controller architecture is created by integrating Algorithms 3 and 4 into an assurance mechanism.

\begin{figure}[t!]
    \centering
    \begin{subfigure}{.45\textwidth}
        \centering
        \includegraphics[width = .8\textwidth]{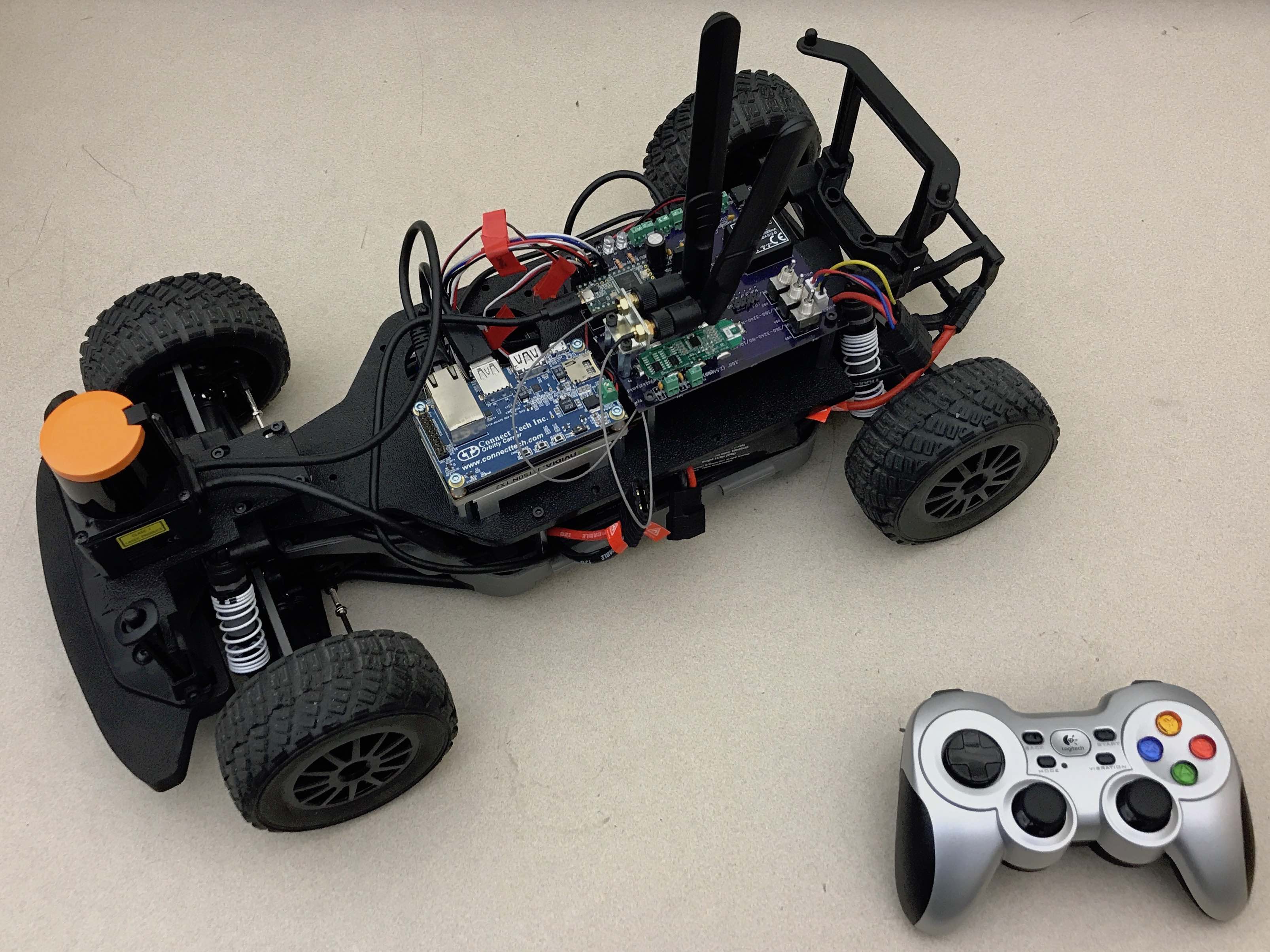}        
        \caption{Experimental test-bed.  A human operator suggested performance control inputs to an modified F1/10 race car (right) using a Logitech Gamepad F710 controller (lower left).}
        \label{fig:car}
    \end{subfigure}
    \par\bigskip
    \begin{subfigure}{.45\textwidth}
        \centering
        \includegraphics[width = .85\textwidth]{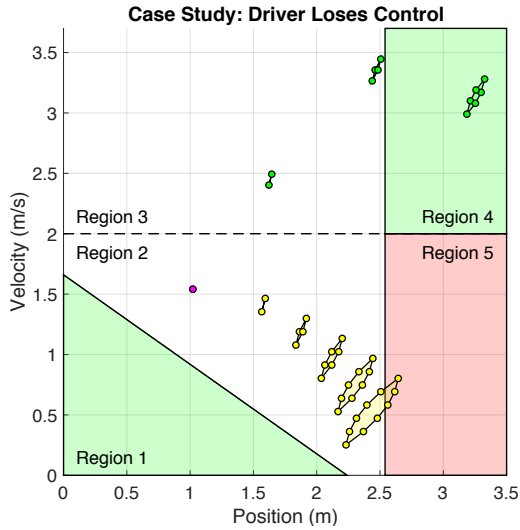}
        \caption{Safety controller implementation on F1/10 race car.  The safe zone of the backup controller, $S^b$ is shown in green (Regions 1 and 4). Regions 2, 3, 4 and 5 are labeled $\varnothing$, $Fast$, $Tower \wedge Fast$ and  $Tower$, respectively. The current vehicle state is shown in pink.  The non-deterministic trajectory resulting from a memorized recovery input sequence is shown in green, and the driver's suggested trajectory, which causes the system to violate the mission objective, is shown in yellow.}
        \label{fig:other}
    \end{subfigure}
    \caption{Case study test-bed and trial data.}
\end{figure}

\subsection{Case Study Implementation}
We present the scenario where the vehicle driver, who initially suggested safe inputs, suggests an unsafe control policy (Figure \ref{fig:other}). Performance control inputs are passed to the system 250 milliseconds after the driver sent them via remote control; this lag-time allowed the assurance mechanism to analyze control inputs as though they were suggested as a string. 
The driver first suggested an input sequence that guaranteed the that DeLorean would satisfy the mission objective $\varphi$.  This allows the DeLorean to leave $S^b$.  At a future timestep, the driver suggests a control input sequence which allowed for the possibility that the DeLorean would violate $\varphi$.
The assurance mechanism then applies the memorized recovery input sequence, and the DeLorean passes the clock tower with sufficient velocity.


\section{Acknowledgements}
The authors wish to thank Will Stuckey for his work with test-bed development.

\section{Conclusions}
This paper introduces the safety controller as a runtime assurance mechanism for system objectives expressed in linear temporal logic.
A case study is presented which details the construction and implementation of a safety controller on a non-deterministic cyber-physical system.

\bibliographystyle{ieeetr}
\bibliography{Monitor.bib}

\begin{thebibliography}{10}

\bibitem{Bartocci2018}
E.~Bartocci, J.~Deshmukh, A.~Donz{\'e}, G.~Fainekos, O.~Maler,
  D.~Ni{\v{c}}kovi{\'{c}}, and S.~Sankaranarayanan, {\em Specification-Based
  Monitoring of Cyber-Physical Systems: A Survey on Theory, Tools and
  Applications}, pp.~135--175.
\newblock Cham: Springer International Publishing, 2018.

\bibitem{Bauer}
A.~Bauer, M.~Leucker, and C.~Schallhart, ``Runtime verification for ltl and
  tltl,'' {\em ACM Trans. Softw. Eng. Methodol.}, vol.~20, pp.~14:1--14:64,
  Sept. 2011.

\bibitem{Neider2018RobustMO}
D.~Neider, M.~Schwenger, P.~Tabuada, A.~Weinert, and M.~Zimmermann, ``Robust
  monitoring of linear temporal properties,'' {\em CoRR}, vol.~abs/1807.08203,
  2018.

\bibitem{Deshmukh2017}
J.~V. Deshmukh, A.~Donz{\'e}, S.~Ghosh, X.~Jin, G.~Juniwal, and S.~A. Seshia,
  ``Robust online monitoring of signal temporal logic,'' {\em Formal Methods in
  System Design}, vol.~51, pp.~5--30, Aug 2017.

\bibitem{Sayan}
S.~Mitra, Y.~Wang, N.~Lynch, and E.~Feron, ``Safety verification of model
  helicopter controller using hybrid input/output automata,'' in {\em Hybrid
  Systems: Computation and Control} (O.~Maler and A.~Pnueli, eds.),
  pp.~343--358, Springer Berlin Heidelberg, 2003.

\bibitem{Ligatti2005}
J.~Ligatti, L.~Bauer, and D.~Walker, ``Edit automata: enforcement mechanisms
  for run-time security policies,'' {\em International Journal of Information
  Security}, vol.~4, pp.~2--16, Feb 2005.

\bibitem{Bielova}
N.~Bielova and F.~Massacci, ``Do you really mean what you actually enforced?,''
  in {\em Formal Aspects in Security and Trust} (P.~Degano, J.~Guttman, and
  F.~Martinelli, eds.), (Berlin, Heidelberg), pp.~287--301, Springer Berlin
  Heidelberg, 2009.

\bibitem{Tomlin}
I.~Mitchell and C.~J. Tomlin, ``Level set methods for computation in hybrid
  systems,'' in {\em Hybrid Systems: Computation and Control} (N.~Lynch and
  B.~H. Krogh, eds.), (Berlin, Heidelberg), pp.~310--323, Springer Berlin
  Heidelberg, 2000.

\bibitem{Prajna}
S.~Prajna and A.~Jadbabaie, ``Safety verification of hybrid systems using
  barrier certificates,'' in {\em Hybrid Systems: Computation and Control}
  (R.~Alur and G.~J. Pappas, eds.), (Berlin, Heidelberg), pp.~477--492,
  Springer Berlin Heidelberg, 2004.

\bibitem{Wongpiromsarn}
T.~Wongpiromsarn, U.~Topcu, and R.~M. Murray, ``Receding horizon temporal logic
  planning for dynamical systems,'' in {\em Proceedings of the 48h IEEE
  Conference on Decision and Control (CDC) held jointly with 2009 28th Chinese
  Control Conference}, pp.~5997--6004, Dec 2009.

\bibitem{Jadbabaie}
A.~Jadbabaie, J.~Yu, and J.~Hauser, ``Unconstrained receding-horizon control of
  nonlinear systems,'' {\em IEEE Transactions on Automatic Control}, vol.~46,
  pp.~776--783, May 2001.

\bibitem{Simplex}
J.~G. Rivera and A.~A. Danylyszyn, ``Formalizing the uni-processor simplex
  architecture,'' tech. rep., Carnegie Mellon University School of Computer
  Science, 1995.

\bibitem{Baier}
C.~Baier and J.-P. Katoen, {\em Principles of Model Checking}, vol.~26202649.
\newblock MIT Press, 01 2008.

\bibitem{Belta}
C.~Belta, B.~Yordanov, and E.~A. Gol, {\em Formal Methods for Discrete-Time
  Dynamical Systems}, vol.~89.
\newblock Springer, 2017.

\bibitem{Kupferman2001}
O.~Kupferman and M.~Y. Vardi, ``Model checking of safety properties,'' {\em
  Formal Methods in System Design}, vol.~19, pp.~291--314, Nov 2001.

\bibitem{Kurzhanski1993}
A.~B. Kurzhanski and T.~F. Filippova, {\em On the Theory of Trajectory Tubes
  --- A Mathematical Formalism for Uncertain Dynamics, Viability and Control},
  pp.~122--188.
\newblock Boston, MA: Birkh{\"a}user Boston, 1993.

\bibitem{minimization}
J.~Berstel, L.~Boasson, O.~Carton, and I.~Fagnot, ``Minimization of automata,''
  {\em CoRR}, vol.~abs/1010.5318, 2010.

\bibitem{Tabuada}
P.~Tabuada, {\em Verification and Control of Hybrid Systems: A Symbolic
  Approach}.
\newblock Springer Publishing Company, Incorporated, 1st~ed., 2009.

\bibitem{levine2010control}
W.~S. Levine, {\em The Control Systems Handbook: Control System Advanced
  Methods}.
\newblock CRC press, 2010.

\bibitem{AAAI1817211}
M.~Alshiekh, R.~Bloem, R.~Ehlers, B.~Könighofer, S.~Niekum, and U.~Topcu,
  ``Safe reinforcement learning via shielding,'' in {\em AAAI Conference on
  Artificial Intelligence}, 2018.

\bibitem{foneten}
M.~O'Kelly, V.~Sukhil, H.~Abbas, J.~Harkins, C.~Kao, Y.~V. Pant, R.~Mangharam,
  D.~Agarwal, M.~Behl, P.~Burgio, and M.~Bertogna, ``F1/10: An open-source
  autonomous cyber-physical platform,'' 2019.

\end{thebibliography}

\end{document}